\newtheorem{lem}{Lemma}
\newtheorem{prop}{Proposition}
\newenvironment{proof}{\textbf{Proof}:}{\hfill\rule{4pt}{8pt}}
\begin{document}

\title[Conservation laws and symmetries of Hunter-Saxton equation]{Conservation laws and symmetries of Hunter-Saxton equation: revisited}

\author{Kai Tian$^1$\footnote{Corresponding author} \& Q. P. Liu$^2$}

\address{Department of Mathematics, China University of Mining and Technology, Beijing, 100083, P. R. China}
\ead{\mailto{$^1$tiankai@lsec.cc.ac.cn}, \mailto{$^2$qpl@cumtb.edu.cn}}
\vspace{10pt}
\begin{indented}
\item[] \today
\end{indented}

\begin{abstract}
Through a reciprocal transformation $\mathcal{T}_0$ induced by  the conservation law $\partial_t(u_x^2) = \partial_x(2uu_x^2)$, the Hunter-Saxton (HS) equation $u_{xt} = 2uu_{2x} + u_x^2$ is shown to possess conserved densities involving arbitrary smooth functions, which have their roots in infinitesimal symmetries of $w_t = w^2$, the counterpart of the HS equation under $\mathcal{T}_0$. Hierarchies of commuting symmetries of the HS equation are studied under appropriate changes of variables initiated by $\mathcal{T}_0$, and two of these are linearized while the other is identical to the hierarchy of commuting symmetries admitted by the potential modified Korteweg-de Vries equation. A fifth order symmetry of the HS equation is endowed with a sixth order hereditary recursion operator by its connection with the Fordy-Gibbons equation. These results reveal the origin for the rich and remarkable structures of the HS equation and partially answer the questions raised by Wang [\NL {\bf 23}(2010) 2009].
\end{abstract}

\ams{37K05, 37K10, 37K35}
\vspace{2pc}
\noindent{\it Keywords}: Reciprocal transformations, Symmetries, Recursion operators

%
%
%

\section{Introduction}
The following equation, known as the Hunter-Saxton (HS) equation,
\begin{equation}\label{hs}
u_{xt} = 2uu_{2x}+u_x^2,
\end{equation}
where subscripts denote partial derivatives and $u_{jx}$ stands for ${\partial^j u}/{\partial x^j}$, was proposed by Hunter and Saxton to describe weakly nonlinear orientation waves in a massive nematic liquid crystal direct field \cite{saxton}. This equation has been studied extensively  and various interesting properties have been established from different points of view \cite{beals,constantin,zheng1,zheng2,zheng3,khe,lenells}. In particular, it was shown that the HS equation has a remarkable property \cite{zheng1}, namely it can be derived from two different variational principals and one of them is the high-frequency limit of the variational principle for the Camassa-Holm equation, which is completely integrable \cite{cama}.  The bi-variational property implies that the HS equation (\ref{hs}) is a completely integrable bi-Hamiltonian system and may  be embedded in the Harry Dym hierarchy \cite{zheng1}. The evolutionary form of HS equation
\begin{equation}\label{hs:nonlocal}
u_t = 2uu_x - \partial_x^{-1} u_x^2,
\end{equation}
where the integrating operator $\partial_x^{-1}$ is defined as the inverse of $\partial_x$ such that $\partial_x\partial_x^{-1} = \partial_x^{-1}\partial_x =1$, was interpreted as a governing equation of shortwave perturbation in one-dimensional relaxing medium \cite{golenia}. In addition to its physical importance, the HS equation is also remarkable in geometric view as the geodesic flow equation in a homogeneous space with right-invariant metrics \cite{khe,lenells}.

Recently, certain extraordinary algebraic and geometric properties of the HS equation were established, such as  infinitely many $(x,t)$-independent conserved densities with free parameters \cite[Theorem 2]{wang}, and three non-equivalent hereditary recursion operators \cite[Theorem 3]{wang}, which yield three hierarchies of higher order infinitesimal symmetries (abbreviated to symmetries) independent of $x$ and $t$. Symmetries belonging to the same hierarchy are commutative with each other, while those generated by different recursion operators non-commutative. This phenomenon is contrary to the common belief about $(x,t)$-independent conserved densities (or symmetreis) in the theory of integrable evolution equations \cite{fokas1,olverbook}, and distinguishes the HS equation from typical completely integrable equations, represented by the Korteweg-de Vries (KdV) equation \cite{miura}. As posed at the end of the paper  \cite{wang}, the fundamental and natural questions are:  why does the HS equation possess so many conserved densities and symmetries, and where do they come from?

This paper aims to answer these questions, and provides reasonable explanations about exceptional features of the HS equation. Two general philosophies in contents of symmetry methods \cite{olverbook} may lead us to the destination. On the one hand, linear equations usually exhibit rich symmetry structures, and as their disguises, linearizable equations inherit abundant symmetries from them. On the other hand, there are enormous symmetries for flow equations of basic point transformations, for instance  the flow equation of $x$-translation $u_t=u_x$ takes all smooth functions $F(u,u_x,u_{2x},\cdots, u_{nx})$ as its symmetries, while the flow equation of $u$-translation $u_t=1$ admits symmetries like $F(u_x,u_{2x},\cdots, u_{nx})$, where $F$ stands for an arbitrary smooth function of its arguments.

It is noted that via reciprocal transformations the HS equation \eref{hs} is brought  to either the Liouville equation \cite{dai}
\begin{equation}\label{liou}
\phi_{yt} = 2\exp(\phi),
\end{equation}
or
\begin{equation}\label{hsode}
w_t = w^2.
\end{equation}
an equation, which can be solvable by quadrature \cite{pavlov}.
Since the Liouville equation is known as  linearizable or C-integrable,  its symmetries are very rich in the sense of involving arbitrary functions \cite{zhiber}. Furthermore, \Eref{hsode} is equivalent to the flow equation of $v$-translation $v_t = 1$ by assuming $w=-v^{-1}$, so its symmetries are also enormous. In fact, general solutions to the HS equation \eref{hs} was constructed from solutions to \Eref{hsode}. We will take \Eref{hsode} as the key to reveal secretes of novel properties of the HS equation.

This paper is organized as follows. In the next section, various reciprocal transformations, implied by  different conservation laws, are shown to convert the HS equation \eref{hs} to \Eref{hsode}, and one of them (denoted as $\mathcal{T}_0$), generated by a simple conservation law, will play a crucial role in constructing conservation laws and symmetries of the HS equation. Induced by this universal reciprocal transformation, a direct relation is built up between the linearized symmetry equation of \Eref{hsode} and a class of conservation laws of the HS equation, and thus in section \ref{sec:3} two classes of conservation laws (Propositions \ref{pro:1} and \ref{pro:2}) with arbitrary functions are discovered for the HS equation. In section \ref{sec:4}, we consider some hierarchies \cite{wang} of commuting symmetries of the HS equation by transforming them to familiar models through appropriate changes of variables, and in a similar way establish a hereditary recursion operator for a fifth order symmetries of the HS equation.  As stated in the last section, these results give a reasonable explanation about the unusual algebraic properties of the HS equation, and partially answer questions raised by Wang \cite{wang}.

\section{Reciprocal relations between the HS equation \eref{hs} and \Eref{hsode}} \label{sec:2}
We first review some  facts on the reciprocal transformations for the HS equation, reported in \cite{dai,pavlov}.  It was shown that the HS equation \eref{hs} is changed to either the Liouville equation \eref{liou}  or \Eref{hsode}, by a reciprocal transformation defined by the following conservation law
\begin{equation}\label{conser:a}
\partial_t\Big(u_{2x}^{1/2}\Big) = \partial_x\Big(2uu_{2x}^{1/2}\Big).
\end{equation}

In fact, certified by the conservation law \eref{conser:a}, a new variable $z$ is defined as
\[
\textmd{d}z = u_{2x}^{1/2} \textmd{d}x + 2uu_{2x}^{1/2} \textmd{d}t,
\]
and it serves as the spatial variable instead of $x$, and triggers an invertible transformation
\[
\mathcal{T}_{1/2}: (x,t,u)\rightarrow (z,t,w),
\]
where $w(z(x,t),t)=u_x$, which relates the HS equation \eref{hs} to Equations \eref{liou} and \eref{hsode}.

To formulate the HS equation \eref{hs} in new variables, one should notice
\begin{equation*}
\left(\eqalign{
\partial_x \\
\partial_t}\right) \Rightarrow \left(\eqalign{
 q\partial_z  \\
 \partial_t + 2uq\partial_z   }\right),\quad \left(q(z(x,t),t)= u_{2x}^{1/2}\right)
\end{equation*}
which is inferred from the chain rule of differentiation. With above transformation formulae, it is easy to check that the HS equation \eref{hs} is converted to \Eref{hsode}, while the conservation law itself \eref{conser:a} becomes
\begin{equation*}\label{eq:2:1}
q_t = 2q^2u_z,
\end{equation*}
which may be rewritten as
\begin{equation}\label{eq:2:2}
\partial_t\Big(\ln q\Big) = 2qu_z = 2u_x = 2w.
\end{equation}
Indicated by definitions of $w$ and $q$, we have a relation between them, given by
\[
q^2 = u_{2x} = q w_z,
\]
which means $q=w_z$. Then differentiating both sides of \Eref{eq:2:2} with respect to $z$ yields
\[
\partial_t\partial_z\Big(\ln q\Big)= 2w_z =2q,
\]
which is identical to the Liouville equation \eref{liou} assuming $q = \exp(\phi)$.

It should be pointed out that {\it the reciprocal transformation $\mathcal{T}_{1/2}$ can not be applied to all commuting flows of the HS equation}, since the conserved density $u_{2x}^{1/2}$ are not always involutory with others under the Poisson bracket defined by a nonlocal Hamiltonian operator $\partial_x^{-1}$. This drawback limits the implications of $\mathcal{T}_{1/2}$ and prevents it from clearly unfolding algebraic and geometric structures of the HS equation. An universal transformation for all commuting flows is expected, and we rest our hope on more simple conservation laws.

As a byproduct, many conservation laws different from \eref{conser:a} can also be utilized to generate reciprocal transformations relating the HS equation \eref{hs} with \Eref{hsode}. Indeed, it is direct to check that the HS equation \eref{hs} admits a conservation law
\begin{equation}\label{conser:b}
\partial_t\Big(u_x^2F(r_2)\Big) = \partial_x\Big(2uu_x^2F(r_2)\Big),\quad(r_2\equiv u_x^{-4}u_{2x})
\end{equation}
where $F$ is an arbitrary smooth function of its argument. Basing on it, we define the change of variables
\[
\mathcal{T}_F:(x,t,u)\rightarrow (y,t,w)
\]
where $w(y(x,t),t) = u_x$ and $y$ is given by
\begin{equation*} 
\textmd{d}y = u_x^2F(r_2)\textmd{d}x + 2uu_x^2F(r_2)\textmd{d}t.
\end{equation*}
Consequently, derivatives with respect to independent variables are related through the rule
\begin{equation*}
\left(\eqalign{
	\partial_x \\
	\partial_t}\right) \Rightarrow \left(\eqalign{
	q\partial_y  \\
	\partial_t + 2uq\partial_y   }\right),\quad \Big(q(y(x,t),t)=u_x^2F(r_2)\Big).
\end{equation*}

Under this new transformation $\mathcal{T}_F$, the HS equation \eref{hs} is also converted to \Eref{hsode}, while the conservation law \eref{conser:b} to
\begin{equation}\label{eq:2:3}
q_t = 2q^2u_y,
\end{equation}
and the relation between $w$ and $q$ in new independent variable $y$ reads as
\begin{equation*}
q=w^2F\left(qw^{-4}w_y\right).
\end{equation*}

We also notice that \Eref{eq:2:3} can be reformulated as a conservation law
\begin{equation*} 
\partial_t\Big(q^{-1}\Big) = \partial_y\Big(-2u\Big),
\end{equation*}
which allows us to define the inverse of $\mathcal{T}_F$ and recover the HS equation \eref{hs} from \Eref{hsode}.

Obviously, if one takes $F(r_2)=r_2^{1/2}$, then all the things return back to the previous results produced by the reciprocal transformation $\mathcal{T}_{1/2}$. However   a simpler choice for $F$ is feasible, namely the  conservation law \eref{conser:b} with $F(r_2) = 1$ gives
\begin{equation*}\label{conser:c}
\partial_t\Big(u_x^2\Big) = \partial_x\Big(2uu_x^2\Big),
\end{equation*}
which results in the transformation
\begin{equation*}
	\mathcal{T}_0:(x,t,u)\rightarrow (y,t,w),
\end{equation*}
where $w(y(x,t),t) = u_x$ and $y$ is defined by
\[
\textmd{d}y = u_x^2\textmd{d}x + 2uu_x^2\textmd{d}t.
\]

The conserved density $u_x^2$ is involutory with all others independent of $x$ under the Poisson bracket defined by $\partial_x^{-1}$ since the Hamiltonian flow generated by $u_x^2/2$ is just the $x$-translation. Hence {\it the reciprocal transformation $\mathcal{T}_0$ can be applied to any commuting flow of the HS equation.} With its help, we manage to get a clear picture of the existence of  abundant conservation laws, symmetries and recursion operators of the HS equation  and this is the content of the subsequent sections.

\section{Conserved densities with arbitrary functions}\label{sec:3}
As an interesting observation, we notice that by means of the inverse of $\mathcal{T}_0$ the linearized symmetry equation of \Eref{hsode}, namely,
\begin{equation}\label{hsodesym}
\left.\Big(\partial_t F(w,w_y,w_{2y}\cdots,w_{ny}) - 2w F(w,w_y,w_{2y}\cdots,w_{ny})\Big)\right|_{w_t = w^2} = 0
\end{equation}
is converted to
\begin{equation}\label{conser:tp}
\left.\left(\partial_t\Big(\mathcal{T}_0^{-1}(F)\Big) - \partial_x\Big(2u\mathcal{T}_0^{-1}(F)\Big)\right)\right|_{u_{xt} = 2uu_{2x}+u_x^2} = 0,
\end{equation}
where the image of $F$ under $\mathcal{T}_0^{-1}$ is given by
\begin{equation*}
\mathcal{T}_0^{-1}(F) = F\Big(u_x,u_x^{-2}u_{2x},(u_x^{-2}\partial_x)^2(u_x),\cdots,(u_x^{-2}\partial_x)^n(u_x)\Big).
\end{equation*}
In contents of symmetry methods \cite{olverbook}, any function $F(w,w_y,w_{2y}\cdots,w_{ny})$ is referred to as an infinitesimal symmetry of \Eref{hsode} if it satisfies \Eref{hsodesym}, while the function $\mathcal{T}_0^{-1}(F)$ would be a conserved density of the HS equation if \Eref{conser:tp} holds.  The intimate relation between Equations \eref{hsodesym} and \eref{conser:tp} enlightens us to produce conserved densities for the HS equation by applying $\mathcal{T}_0^{-1}$ to infinitesimal symmetries of \Eref{hsode}.

Assuming $w=-v^{-1}$, \Eref{hsode} is further brought to
\begin{equation*}
v_t=1
\end{equation*}
for which  any smooth function $G(v_y,v_{2y},\cdots,v_{ny})$, not explicitly depending on $v$, is a symmetry. Coordinating with the transformation $v=-w^{-1}$, any smooth function like
\begin{equation*}
w^2G\left(\partial_y(-w^{-1}),\partial_y^2(-w^{-1}),\cdots,\partial_y^n(-w^{-1})\right)
\end{equation*}
solves \Eref{hsodesym}, thus does supply  a symmetry of \Eref{hsode}.  Applying $\mathcal{T}_0^{-1}$ to it yields a conserved density of the HS equation, given by
\begin{equation*}
u_x^2G(r_2,r_3,\cdots,r_{n+1}),
\end{equation*}
which provides the conservation law
\begin{equation*}
\partial_t\Big(u_x^2G(r_2,r_3,\cdots,r_{n+1})\Big) = \partial_x\Big(2uu_x^2G(r_2,r_3,\cdots,r_{n+1})\Big),
\end{equation*}
where $r_{k+1} \equiv \mathcal{T}_0^{-1}(\partial_y^k(-w^{-1}))  = (u_x^{-2}\partial_x)^k\left(-u_x^{-1}\right), ~k=1,2,\cdots,n$.

For the HS equation, it is remarked that the conserved densities with free parameters, constructed in a recursive approach \cite{wang} and formulated in our notations as
\begin{equation*}
u_x^2 r_2^{\alpha_1}r_3^{\alpha_2}\cdots r_{n+1}^{\alpha_n}\quad (\alpha_1,\alpha_2,\cdots,\alpha_n\mbox{ are arbitrary constants.})
\end{equation*}
can be understood as transparent corollaries of our result. The freedom brought in by free parameters  is attributed to arbitrary smooth functions involved in symmetries of \Eref{hsode}, even more essentially, $v_t=1$.

Accompanying with a big family of conserved densities such that
\begin{equation}\label{law:1}
\partial_t\Big(F(u_x,u_{2x},\cdots,u_{nx})\Big) = \partial_x\Big(2uF(u_x,u_{2x},\cdots,u_{nx})\Big),
\end{equation}
some questions naturally arise, including
\begin{enumerate}
	\item Are there conserved densities other than $u_x^2G(r_2,r_3,\cdots,r_{n+1})$ enabling \Eref{law:1} to be conservations laws for the HS equation?
	\item Is it possible to establish conservation laws of new type different form \Eref{law:1} for the HS equation?
\end{enumerate}
These questions  will be answered by the following propositions.

\begin{lem}\label{lem:1}
	Let $r_1 = -u_x^{-1}$ and $r_{k+1} = (u_x^{-2}\partial_x)^k\left(-u_x^{-1}\right)~(k=1,2,\cdots)$, then
	\begin{equation*}
	\partial_t r_1 = 2u(\partial_xr_1)+1\quad\mbox{and}\quad \partial_t r_k = 2u(\partial_xr_k)\quad(k=2,3,\cdots)
	\end{equation*}
	when the HS equation \eref{hs} holds.
\end{lem}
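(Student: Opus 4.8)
The plan is to argue by induction on $k$, using the recursive structure $r_{k+1} = u_x^{-2}\partial_x r_k$ built into the definition, together with the facts that $\partial_x$ and $\partial_t$ commute and that the HS equation may be used in the form $u_{xt} = 2uu_{2x}+u_x^2$ (its $x$-derivatives being generated automatically). It is convenient to prove a slightly strengthened statement: for every $k\geq 1$ the combination $\partial_t r_k - 2u\,\partial_x r_k$ is a constant $c_k$, with $c_1 = 1$ and $c_k = 0$ for $k\geq 2$. This reformulation is what renders the inhomogeneous term in the $r_1$-identity harmless in the inductive step.

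For the base case I would simply compute $\partial_t r_1 = \partial_t(-u_x^{-1}) = u_x^{-2}u_{xt}$ and substitute the HS equation, obtaining $u_x^{-2}(2uu_{2x}+u_x^2) = 2u(u_x^{-2}u_{2x}) + 1 = 2u\,\partial_x r_1 + 1$, so $c_1 = 1$. For the inductive step, assume $\partial_t r_k = 2u\,\partial_x r_k + c_k$ with $c_k$ constant. Differentiating $r_{k+1}=u_x^{-2}\partial_x r_k$ in $t$ and using $\partial_t\partial_x = \partial_x\partial_t$ gives $\partial_t r_{k+1} = -2u_x^{-3}u_{xt}\,\partial_x r_k + u_x^{-2}\partial_x(\partial_t r_k)$. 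Inserting the inductive hypothesis (the constant $c_k$ is killed by $\partial_x$) turns this into $-2u_x^{-3}u_{xt}\,\partial_x r_k + 2u_x^{-1}\partial_x r_k + 2u\,u_x^{-2}\partial_x^2 r_k$, and replacing $u_{xt}$ by $2uu_{2x}+u_x^2$ makes the two $u_x^{-1}\partial_x r_k$ terms cancel, leaving $\partial_t r_{k+1} = 2u\big(u_x^{-2}\partial_x^2 r_k - 2u_x^{-3}u_{2x}\,\partial_x r_k\big) = 2u\,\partial_x(u_x^{-2}\partial_x r_k) = 2u\,\partial_x r_{k+1}$. Hence $c_{k+1}=0$, which closes the induction and yields the lemma.

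I do not expect a genuine obstacle here; the computation is routine. The only thing needing care is the bookkeeping of the inhomogeneous term: the $r_1$-relation carries a $+1$, but $r_2, r_3, \dots$ are produced from $r_1$ by applying $u_x^{-2}\partial_x$, and $\partial_x$ annihilates that constant, so all higher $r_k$ obey the homogeneous relation --- phrasing the induction via ``$\partial_t r_k - 2u\partial_x r_k$ is constant'' encodes exactly this. Conceptually, the lemma is the infinitesimal shadow of the transformation $\mathcal{T}_0$: the $r_k$ are the pullbacks $\mathcal{T}_0^{-1}\big(\partial_y^{k-1}(-w^{-1})\big)$, and in the $(y,t)$ frame, where the HS equation becomes $w_t=w^2$ and hence $v:=-w^{-1}$ satisfies $v_t=1$, one has $\partial_t(\partial_y^{k-1}v)=0$ for $k\geq 2$ and $\partial_t v = 1$; transporting these identities back through $\partial_x = u_x^2\partial_y$ and $\partial_t \mapsto \partial_t + 2uu_x^2\partial_y$ reproduces precisely the asserted relations. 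I would nonetheless present the direct induction, since it is self-contained and uses nothing beyond the HS equation itself.
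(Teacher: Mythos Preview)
Your proposal is correct and follows essentially the same approach as the paper's own proof, which simply states that $r_1$ is handled by direct calculation and the remaining $r_k$ by induction via the recursion $r_{k+1}=u_x^{-2}\partial_x r_k$. You have merely filled in the details the paper omits; the added device of tracking the constant $c_k$ is a clean way to absorb the inhomogeneous $+1$ from the base case into the inductive step, and your concluding remark tying the identities to $\mathcal{T}_0$ accurately reflects the paper's broader narrative.
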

\begin{proof}
By direct calculation, the evolution of $r_1$ is obtained, and then with aid of the recursive relation $r_{k+1} = u_x^{-2}(\partial_xr_k)$,  one may verify evolution equations for other $r_k$'s by induction.
\end{proof}

\begin{prop}\label{pro:1}
For the HS equation, $F(u_x,u_{2x},\cdots,u_{nx})$ is a conserved density solving \Eref{law:1}  if and only if
\begin{equation*}
F(u_x,u_{2x},\cdots,u_{nx}) = u_x^2G(r_2,r_3,\cdots,r_n),
\end{equation*}
where $r_k$'s are defined in Lemma \ref{lem:1} and  $G$ is an arbitrary smooth function of its arguments.
\end{prop}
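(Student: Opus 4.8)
The plan is to pass from the jet coordinates $u_x,u_{2x},\dots,u_{nx}$ to the variables $r_1,r_2,\dots,r_n$ of Lemma~\ref{lem:1}, in which the conservation-law condition \eref{law:1} collapses to a single first-order ordinary differential equation. The ``if'' direction is then immediate: writing $F=u_x^2G(r_2,\dots,r_n)$ and using $\partial_t(u_x^2)=\partial_x(2uu_x^2)$ (the case $F=1$ of \eref{conser:b}, equivalently a one-line consequence of \eref{hs}) together with $\partial_t r_k=2u\,\partial_x r_k$ for $k\ge2$ from Lemma~\ref{lem:1}, the chain rule gives $\partial_t F=\partial_x(2uu_x^2)\,G+2uu_x^2\,\partial_x G=\partial_x(2uF)$ when \eref{hs} holds.

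For the converse, the first step is to note that $(u_x,u_{2x},\dots,u_{nx})\mapsto(r_1,r_2,\dots,r_n)$ is an invertible change of coordinates on the region $u_x\ne0$: from $r_1=-u_x^{-1}$ and the recursion $r_{k+1}=u_x^{-2}\partial_x r_k$ one checks by induction that $r_k$ depends only on $u_x,\dots,u_{kx}$, with $\partial r_k/\partial u_{kx}$ a nonvanishing power of $u_x$, so the map is triangular with invertible diagonal. Hence any $F(u_x,\dots,u_{nx})$ may be rewritten as $\hat F(r_1,\dots,r_n)$ for a smooth function $\hat F$, and no variable $r_{n+1}$ enters.

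The second step is the computation of $\partial_t F-\partial_x(2uF)$ in these coordinates. Using $\partial_t r_1=2u\,\partial_x r_1+1$ and $\partial_t r_k=2u\,\partial_x r_k$ for $k\ge2$ from Lemma~\ref{lem:1}, the chain rule yields $\partial_t F=\partial_{r_1}\hat F+2u\,\partial_x F$, while $\partial_x(2uF)=2u_x F+2u\,\partial_x F$; the $2u\,\partial_x F$ terms cancel and, since $u_x=-r_1^{-1}$, one is left with $\partial_t F-\partial_x(2uF)=\partial_{r_1}\hat F+2r_1^{-1}\hat F$. Because $r_1,\dots,r_n$ are functionally independent and this expression contains no $u$-dependence, its vanishing on solutions of \eref{hs} is equivalent to the identity $\partial_{r_1}\hat F=-2r_1^{-1}\hat F$ in jet coordinates; integrating in $r_1$ (with $r_2,\dots,r_n$ as parameters) gives $\hat F=r_1^{-2}G(r_2,\dots,r_n)=u_x^2G(r_2,\dots,r_n)$, which is the assertion.

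I expect the main obstacle to be bookkeeping rather than anything conceptual: one must verify carefully that $r_2,\dots,r_n$ (equivalently $r_1,\dots,r_n$) generate exactly the same ring of differential functions as $u_x,\dots,u_{nx}$, and that the reduction of ``$\partial_t F-\partial_x(2uF)=0$ modulo \eref{hs}'' to the identity ``$\partial_{r_1}\hat F+2r_1^{-1}\hat F\equiv0$'' is legitimate. Both hinge on the $r_k$'s being genuine coordinates and on Lemma~\ref{lem:1} supplying purely local, $\partial_x^{-1}$-free evolution formulas for them; once these are in place, the remaining calculation is the elementary ODE above.
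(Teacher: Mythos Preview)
Your proof is correct and follows essentially the same route as the paper: change to the coordinates $(r_1,\dots,r_n)$ via the triangular map with $\partial r_k/\partial u_{kx}=u_x^{-2k}$, apply Lemma~\ref{lem:1} so that the $2u\,\partial_x F$ terms cancel, reduce \eref{law:1} to the ODE $\partial_{r_1}\hat F+2r_1^{-1}\hat F=0$, and integrate. The only cosmetic difference is that you spell out the sufficiency direction explicitly, whereas the paper defers it to the preceding construction via $\mathcal{T}_0$.
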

\begin{proof}
The sufficiency follows from the previous paragraphs, so we just prove the necessity by solving all conserved densities  from the assumed conservation law \eref{law:1}.

To this end, we first consider the following invertible change of coordinates
\begin{equation*}
\Gamma: (u_x,u_{2x},\cdots,u_{nx})\mapsto(r_1,r_2,\cdots,r_n),
\end{equation*}
which is well-defined since $\partial r_k/\partial u_{kx} = u_x^{-2k}~(k=1,2,\cdots)$.  Then, the unknown function $F(u_x,u_{2x},\cdots,u_{nx})$ in the new coordinates is denoted as $\hat{F}(r_1,r_2,\cdots,r_n)$  and in this way the conservation law \eref{law:1} yields
\begin{equation*}
\eqalign{
0 &= \partial_t\Big(\hat{F}(r_1,r_2,\cdots,r_n)\Big) - \partial_x\Big(2u\hat{F}(r_1,r_2,\cdots,r_n)\Big) \\
&= \sum_{k=1}^n (\partial_tr_k)\frac{\partial \hat{F}}{\partial r_k} - 2u\sum_{k=1}^n (\partial_xr_k)\frac{\partial \hat{F}}{\partial r_k} + 2r_1^{-1} \hat{F} .
}
\end{equation*}
Taking Lemma \ref{lem:1} into account, we have
\begin{equation*}
\eqalign{
0 &= \sum_{k=1}^n 2u(\partial_xr_k)\frac{\partial \hat{F}}{\partial r_k} + \frac{\partial \hat{F}}{\partial r_1} - 2u\sum_{k=1}^n\frac{\partial \hat{F}}{\partial r_k}(\partial_xr_k) + 2r_1^{-1} \hat{F} \\
&= \frac{\partial \hat{F}}{\partial r_1} + 2r_1^{-1} \hat{F}.
}
\end{equation*}
Solving it by quadrature, we obtain
\begin{equation*}
\hat{F}(r_1,r_2,\cdots,r_n) = r_1^{-2} G(r_2,r_3,\cdots,r_n),
\end{equation*}
which implies $F(u_x,u_{2x},\cdots,u_{nx}) = u_x^2G(r_2,r_3,\cdots,r_n)$, where $G$ stands for an arbitrary smooth function of its arguments.
\end{proof}

To get conservation laws of new type for the HS equation, let us consider a smooth function
\begin{equation*}
H(u_x,u_{2x},\cdots,u_{nx}) \equiv r_1^{-2} P(r_1,r_2,\cdots,r_n),
\end{equation*}
which evolves on solutions to the HS equation as
\begin{equation*}
\eqalign{
\partial_t H & = -2r_1^{-3}(\partial_t r_1)P + r_1^{-2}\sum_{k=1}^n (\partial_t r_k)\frac{\partial P}{\partial r_k} \\
&= -2r_1^{-3}\Big(2u(\partial_x r_1)+1\Big)P + r_1^{-2} \left(\frac{\partial P}{\partial r_1} + \sum_{k=1}^n 2u(\partial_x r_k)\frac{\partial P}{\partial r_k}\right) \\
&= \partial_x \Big(2ur_1^{-2} P\Big) + r_1^{-2}\frac{\partial P}{\partial r_1}.
}
\end{equation*}
Therefore, if there exists $Q(r_1,r_2,\cdots,r_{n-1})$ such that
\begin{equation*}
\frac{\partial P}{\partial r_1} = r_1^2 (\partial_x Q),
\end{equation*}
then $H(u_x,u_{2x},\cdots,u_{nx}) = r_1^{-2} P(r_1,r_2,\cdots,r_n)$ would be a conserved density such that a new conservation law
\begin{equation*}
\partial_t H = \partial_x\Big(2uH + Q\Big).
\end{equation*}

Summarizing above discussion, we have
\begin{prop}
	For any smooth function $Q(r_1,r_2,\cdots,r_{n-1})~(n\geq 2)$, let
	\begin{equation*}\label{pro:2}
	P(r_1,r_2,\cdots,r_n) = \int r_1^2 (\partial_x Q) \textmd{d}r_1,
	\end{equation*}
	then $H(u_x,u_{2x},\cdots,u_{nx}) = r_1^{-2} P(r_1,r_2,\cdots,r_n)$ is a conserved density of the HS equation such that
	\begin{equation}\label{law:2}
	\partial_t H(u_x,u_{2x},\cdots,u_{nx}) = \partial_x\Big(2uH(u_x,u_{2x},\cdots,u_{nx}) + Q(r_1,r_2,\cdots,r_{n-1})\Big).
	\end{equation}
\end{prop}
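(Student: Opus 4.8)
The plan is to verify \eref{law:2} by a direct computation on the jet variables, essentially reading the conclusion off the evolution formula for $H=r_1^{-2}P$ that is displayed immediately before the statement.

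The first thing I would settle is that the definition of $P$ is meaningful. Turning the recursion $r_{k+1}=(u_x^{-2}\partial_x)r_k$ around gives $\partial_x r_k=u_x^2 r_{k+1}$, and since $r_1=-u_x^{-1}$ this reads $\partial_x r_k=r_1^{-2}r_{k+1}$. Therefore, with $Q=Q(r_1,\dots,r_{n-1})$,
\begin{equation*}
\partial_x Q=\sum_{k=1}^{n-1}\frac{\partial Q}{\partial r_k}\,\partial_x r_k = r_1^{-2}\sum_{k=1}^{n-1}r_{k+1}\,\frac{\partial Q}{\partial r_k},
\end{equation*}
so that $r_1^2(\partial_x Q)=\sum_{k=1}^{n-1}r_{k+1}\,\partial Q/\partial r_k$ is a genuine smooth function of $r_1,\dots,r_n$. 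Its antiderivative with respect to $r_1$, with $r_2,\dots,r_n$ held fixed, is then a well-defined function $P=P(r_1,\dots,r_n)$ (unique up to an additive function of $r_2,\dots,r_n$, which is harmless), and by construction $\partial P/\partial r_1=r_1^2(\partial_x Q)$. Through the invertible change of coordinates $\Gamma$ of Proposition \ref{pro:1}, $H=r_1^{-2}P$ is then a bona fide smooth function of $u_x,u_{2x},\dots,u_{nx}$, as claimed.

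With that in place, I would simply invoke the computation already carried out above: on solutions of the HS equation,
\begin{equation*}
\partial_t H=\partial_x\!\left(2u r_1^{-2}P\right)+r_1^{-2}\,\frac{\partial P}{\partial r_1},
\end{equation*}
which comes from the product rule together with Lemma \ref{lem:1} ($\partial_t r_1=2u(\partial_x r_1)+1$ and $\partial_t r_k=2u(\partial_x r_k)$ for $k\geq2$) and the elementary identity $2u_x r_1^{-2}=-2r_1^{-3}$. Now substitute $\partial P/\partial r_1=r_1^2(\partial_x Q)$: the last term collapses to $r_1^{-2}\cdot r_1^2(\partial_x Q)=\partial_x Q$, while $2u r_1^{-2}P=2uH$, and we arrive at $\partial_t H=\partial_x(2uH)+\partial_x Q=\partial_x(2uH+Q)$, which is exactly \eref{law:2}.

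I do not anticipate a real obstacle here: the whole argument is bookkeeping with the chain rule on the jet space plus one use of Lemma \ref{lem:1}, most of which has already been done in the paragraph preceding the statement. The only step that deserves a little care is the opening one — confirming that ``$\int r_1^2(\partial_x Q)\,\mathrm{d}r_1$'' really is a legitimate operation, i.e. that after eliminating $\partial_x r_k$ via the recursion the integrand depends on $r_1,\dots,r_n$ alone with $r_2,\dots,r_n$ inert during the integration, and that the resulting $P$ genuinely satisfies $\partial P/\partial r_1=r_1^2(\partial_x Q)$ as functions on the jet space. Everything past that point reproduces the display preceding the statement.
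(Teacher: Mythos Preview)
Your proposal is correct and follows essentially the same approach as the paper, which in fact presents the proof as the discussion immediately preceding the proposition and then writes ``Summarizing above discussion, we have\ldots''. Your additional step of unpacking $\partial_x r_k=r_1^{-2}r_{k+1}$ to check that $r_1^2(\partial_x Q)$ is genuinely a function of $r_1,\dots,r_n$ (and hence that the $r_1$-antiderivative defining $P$ is meaningful) is a welcome clarification the paper leaves implicit.
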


As an example, we will deduce a conserved density from a given smooth function $Q(r_2)$.  Rewriting the recursive relation among $r_k$'s as
\begin{equation*}
\partial_x r_k = u_x^2r_{k+1} = r_1^{-2}r_{k+1}~(k=1,2,\cdots),
\end{equation*}
hence differentiating $Q(r_2)$ with respect to $x$ yields
\begin{equation*}
\partial_xQ(r_2) = (\partial_x r_2)Q'(r_2) = r_1^{-2} r_3Q'(r_2),
\end{equation*}
where $Q'(r_2)={dQ}/{dr_2}$, namely the prime denotes derivative with respect to the relevant argument. Subsequently, we have
\begin{equation*}
P(r_1,r_2,r_3) = \int r_1^2 (\partial_x Q) \textmd{d}r_1 = \int r_3Q'(r_2) \textmd{d}r_1 = r_1r_3Q'(r_2) + T(r_2,r_3),
\end{equation*}
where $T(r_2,r_3)$ is an arbitrary smooth function produced by integration.  Therefore, a conserved density of the HS equation is given by
\begin{equation*}
\eqalign{
H(u_x,u_{2x},u_{3x}) &= r_1^{-2}P(r_1,r_2,r_3) = r_1^{-1}r_3Q'(r_2) + r_1^{-2}T(r_2,r_3) \\
&= -u_x\Big(u_x^{-6}u_{3x}-4 u_x^{-7}u_{2x}^2\Big)Q'(r_2) + u_x^2 T(r_2,r_3) \\
&= - u_x^{-5}u_{3x}Q'(r_2)  + 4u_x^{-6}u_{2x}^2Q'(r_2)+ u_x^2 T(r_2,r_3) \\
&= - u_x^{-5}u_{3x}Q'(r_2) + u_x^2 \hat{T}(r_2,r_3),
}
\end{equation*}
where $ \hat{T}(r_2,r_3) = 4r_2^2Q'(r_2)+T(r_2,r_3)$, and satisfies
\begin{equation*}
\partial_t H(u_x,u_{2x},u_{3x}) = \partial_x\Big(2uH(u_x,u_{2x},u_{3x}) + Q(r_2)\Big).
\end{equation*}
Because $u_x^2 \hat{T}(r_2,r_3)$ is also a conserved density according to Proposition \ref{pro:1}, and satisfies
\begin{equation*}
 \partial_t \Big(u_x^2 \hat{T}(r_2,r_3)\Big) = \partial_x\Big(2uu_x^2 \hat{T}(r_2,r_3)\Big),
\end{equation*}
the conserved density $H(u_x,u_{2x},u_{3x})$ may be replaced by an reduced one $- u_x^{-5}u_{3x}Q'(r_2)$,  with the corresponding conservation law
\begin{equation*}
\partial_t \Big(- u_x^{-5}u_{3x}Q'(r_2)\Big) = \partial_x \Big(-2uu_x^{-5}u_{3x}Q'(r_2) + Q(r_2)\Big),\quad \left(r_2=u_x^{-4}u_{2x}\right).
\end{equation*}

Both propositions are very fruitful to produce conserved densities with arbitrary functions,   however the conserved densities constructed above do not exhaust all possibilities,  for instance the HS equation may also have nonlocal conserved densities  \cite{popo}.  As supplements to this section, nonlocal conserved densities with arbitrary function will be discussed in appendix.

\section{Symmetries and recursion operators demystified }\label{sec:4}
The evolutionary form \eref{hs:nonlocal} of HS equation is formulated as Hamiltonian systems in two different ways
\begin{equation*}
u_t = 2uu_x - \partial_x^{-1}u_x^2
= \partial_x^{-1} \delta_u (-uu_x^2) = \Big(u_x\partial_x^{-2} - \partial_x^{-2}u_x\Big)\delta_u(-u_x^2),
\end{equation*}
where  $\delta_u$ denotes the Euler derivative with respect to $u$. Thus, the pair of compatible Hamiltonian operators $\partial_x^{-1}$ and $(u_x\partial_x^{-2} - \partial_x^{-2}u_x)$  combines into a hereditary recursion operator
\begin{equation*}
\mathcal{R}(u) = \Big(u_x\partial_x^{-2} - \partial_x^{-2}u_x\Big)\partial_x
\end{equation*}
for the HS equation. It is well known that a typical integrable system often admits a hereditary recursion operator. However, for the HS equation, what is interesting is that it possesses new  recursion operators other than $\mathcal{R}(u)$.
Indeed, by classifying a kind of anti-symmetric operators constituting Hamiltonian pairs with $\partial_x^{-1}$, Wang isolated three hereditary (or Nijenhuis) recursion operators for the HS equation \eref{hs:nonlocal}, namely
\begin{equation*}
\eqalign{
\mathcal{R}^{(1)}(u) &= \Big(2u_{2x}^{-1}\partial_x + 2\partial_x u_{2x}^{-1} - u_{2x}^{-3/2}u_{3x}\partial_x^{-1}u_{2x}^{-3/2}u_{3x}\Big)\partial_x, \\
\mathcal{R}^{(2)}(u) &= \Big(u_x^{4}u_{2x}^{-2}\partial_x + \partial_x u_x^{4}u_{2x}^{-2} - 8u_x\partial_x^{-1}u_x\Big)\partial_x, \\
\mathcal{R}^{(3)}(u) &= \Big(u_x^{-4}\partial_x + \partial_x u_x^{-4} - 4 h \partial_x^{-1} u_x - 4 u_x\partial_x^{-1} h\Big)\partial_x, ~~ (h\equiv u_x^{-6}u_{3x}-3 u_x^{-7}u_{2x}^2),
}
\end{equation*}
and clearly showed $\mathcal{R}^{(1)}(u)$ is indeed the inverse of $\mathcal{R}(u)$ so not a truly new recursion operator, but the other two are new  \cite{wang}. As implications of these hereditary recursion operators, three hierarchies of commuting symmetries for the HS equation could be formulated as
\begin{equation*}
\eqalign{
K^{(1)} _{2n+3}(u)&= \mathcal{R}^{(1)}(u)^n \partial_x^{-1}\delta_u (u_{2x}^{1/2}) \\
K^{(2)} _{2n+3}(u) &= \mathcal{R}^{(2)}(u)^n \partial_x^{-1}\delta_u (u_x^6u_{2x}^{-1}) \\
K^{(3)} _{2n+3}(u) &= \mathcal{R}^{(3)}(u)^n \partial_x^{-1}\delta_u (u_x^{-6}u_{2x}^{2})
}\quad (n=0,1,2,\cdots),
\end{equation*}
where the subscript $j$ in the symbol $K^{(i)}_j(u)$ indicates the highest order of $x$-derivatives of $u$, and usually is referred to as the order of symmetry. The leading (i.e. the third order) symmetry in each hierarchy was captured by investigating involution of some conserved densities under the Poisson bracket defined by $\partial_x^{-1}$, and in the same way, Wang conjectured \cite{wang} there would be another hierarchy of commuting symmetries leading by
\begin{equation*}
K^{(4)}_5(u) = \partial_x^{-1}\delta_u\left((u_{3x}-4u_x^{-1}u_{2x}^2)^{1/3}\right),
\end{equation*}
but the corresponding hereditary recursion operator is not known.

To get a clear picture of those symmetries and recursion operators listed above,  we will reformulate them in appropriate variables, and in such way we even establish an hereditary operator for $K^{(4)}_5$. This plan is supported by two notions: on the one hand, as image of the HS equation \eref{hs} under the reciprocal transformation $\mathcal{T}_0$, \Eref{hsode} admits abundant but more accessible  symmetries; on the other hand, the reciprocal transformation $\mathcal{T}_0$ could be extended to all $x$-independent symmetries of the HS equation \eref{hs:nonlocal} since $u_x^2$ produces the trivial $x$-translation symmetry as
\begin{equation*}
\partial_x^{-1}\delta_u(u_x^2/2) = u_x,
\end{equation*}
which qualifies $u_x^2$ as a common conserved density of all commuting flows corresponding to these symmetries.

To proceed, we first recall some relevant facts on symmetries and their behaviors under transformations \cite{fokas,fokas1}: {\it Suppose under an invertable change of variables
\begin{equation}\label{gcv}
	\left\{\eqalign{
		y = P(x,u,u_x,u_{2x},\cdots), \\
		v(y,t) = Q(x,u,u_x,u_{2x},\cdots),
	}\right.
\end{equation}
the equation $u_t = F(u,u_x,\cdots,u_{nx})$ is associated with $v_t = \tilde{F}(v,v_x,\cdots,v_{nx})$, then any symmetry of the former is mapped to that of the latter by the operator
\begin{equation}\label{opd}
\mathbf{D} = Q_u - (\partial_x P)^{-1}(\partial_x Q)P_u,
\end{equation}
where $Q_u$ ($P_u$ respectively) denotes the Frech\'{e}t derivative of $Q$ ($P$ respectively) with repect to $u$. }Let us  take the reciprocal transformation $\mathcal{T}_0$ as an example. It could be reformulated as
\begin{equation*}
\mathcal{T}_0 :\left\{\eqalign{
y &= \partial_x^{-1}u_x^2 \\
w &= u_x ,
}\right.
\end{equation*}
so symmetries of the HS equation are changed into those of \Eref{hsode} by the operator
\begin{equation*}
\eqalign{
\mathbf{D}_0 &= \partial_x - u_x^{-2}u_{2x}\partial_x^{-1}(2u_x)\partial_x = \partial_x + u_x\Big(\partial_x u_x^{-2} - u_x^{-2}\partial_x\Big)\partial_x^{-1}u_x\partial_x\\
&= u_x\partial_x u_x^{-2}\partial_x^{-1} u_x\partial_x = \left. w^3\partial_y w^{-2}\partial_y^{-1}w\partial_y\right|_{\partial_y = u_x^{-2}\partial_x,~w=u_x} .
}
\end{equation*}

\subsection{ $K^{(1)} _{3}(u)$, $K^{(2)} _{3}(u)$ and $K^{(3)} _{3}(u)$ in new variables}
With the Hamiltonian operator $\partial_x^{-1}$ and a conserved density $u_x^2G(r_2,r_3,\cdots,r_n)$, a class of symmetries of the HS equation \eref{hs:nonlocal},  given by
\begin{equation*}
K_{2n-1}(u) = \partial_x^{-1}\delta_u (u_x^2G(r_2,r_3,\cdots,r_n)),
\end{equation*}
 can be mapped to symmetries of \Eref{hsode} by the operator $\mathbf{D}_0$.  For instance, the third order symmetry
\begin{eqnarray*}
K_3(u) &= \partial_x^{-1}\delta_u (u_x^2G(r_2)) \\
&= \Big(u_x^{-6}u_{3x} - 4u_x^{-7}u_{2x}^2 \Big)G^{\prime\prime}(r_2) + 2u_x^{-3}u_{2x}G^{\prime}(r_2) -2u_x G(r_2)
\end{eqnarray*}
where the primes stand for derivatives of $G$ with respect to $r_2$, turns into
\begin{eqnarray*}
K_3(w) &= \mathbf{D}_0 K_3(u)|_{u_x=w,~\partial_x=w^2\partial_y} \\
&= \Big(w_{3y}-6w^{-1}w_yw_{2y}+6w^{-2}w_y^3\Big)G^{\prime\prime}(w^{-2}w_y) \\
&\quad+ \Big(w^{-2}w_{2y}^{2} - 4w^{-3}w_y^2w_{2y} + 4w^{-4}w_y^4\Big)G^{\prime\prime\prime}(w^{-2}w_y),
\end{eqnarray*}
which is a generic third order symmetry of \Eref{hsode}. Assuming $w=-v^{-1}$, \Eref{hsode} is converted to $v_t = 1$ and its symmetry $K_3(w)$ becomes
\begin{equation*}
K_3(v) = v^{2} K_3(w)|_{w=-v^{-1}}= v_{3y} G^{\prime\prime}(v_y) + v_{2y}^2G^{\prime\prime\prime}(v_y).
\end{equation*}
Thus, the precise form of symmetries  depends on $G$. Particularly taking appropriate $G$'s, for $K^{(1)} _{3}(u)$, $K^{(2)} _{3}(u)$ and $K^{(3)} _{3}(u)$,  our results are listed below.
\begin{enumerate}
	\item $G(\bullet) = (\bullet)^{1/2}$, and the corresponding symmetries:
	\begin{eqnarray*}
	K^{(1)} _{3}(u) = -\frac{1}{4}u_{2x}^{-3/2}u_{3x} , \\
	K^{(1)} _{3}(w) = -\frac{1}{4} w^3w_y^{-3/2}w_{3y} + \frac{3}{8}w^3 w_y^{-5/2} w_{2y}^2 ,\\
	K^{(1)} _{3}(v) = \frac{1}{4} v_y^{-3/2}v_{3y} - \frac{3}{8} v_y^{-5/2} v_{2y}^2 .
	\end{eqnarray*}
	\item $G(\bullet) = (\bullet)^{-1}$, and the corresponding symmetries:
	\begin{eqnarray*}
	K^{(2)} _{3}(u) = 2u_x^6 u_{2x}^{-3}u_{3x} - 12u_x^5u_{2x}^{-1} , \\
	K^{(2)} _{3}(w) = 2w^6 w_y^{-3} w_{3y} - 6w^6w_y^{-4}w_{2y}^2 + 12w^5w_y^{-2}w_{2y} -12 w^4 ,\\
	K^{(2)} _{3}(v) = 2v_y^{-3} v_{3y} -6v_y^{-4}v_{2y}^2 .
	\end{eqnarray*}
	\item $G(\bullet) = (\bullet)^{2}$, and the corresponding symmetries:
	\begin{eqnarray*}
	K^{(3)} _{3}(u) = 2u_x^{-6} u_{3x} - 6u_x^{-7}u_{2x}^{2} , \\
	K^{(3)} _{3}(w) = 2w_{3y} -12 w^{-1} w_y w_{2y} + 12w^{-2} w_y^3 ,\\
	K^{(3)} _{3}(v) = 2v_{3y} .
	\end{eqnarray*}
\end{enumerate}
The very last formula shows that the flow equation defined by $K^{(3)} _{3}(u)$  is a linear equation  in disguise. However, the flow equations associated with the other two, namely $K^{(1)} _{3}(u)$ and $K^{(2)} _{3}(u)$, still do not reveal themselves. To see their real faces, additional actions must be taken.

Regarding $K^{(1)} _{3}(v)$, a change of variables is defined as
\begin{equation*}
\left\{\eqalign{
z = \partial_y^{-1} v_y^{1/2} 	 \\
r= v_y,
}\right.
\end{equation*}
which plays the role of a reciprocal transformation\footnote{The flow equation of $K^{(1)} _{3}(v)$, i.e. $v_\tau = K^{(1)} _{3}(v)$ has a conservation law, i.e.
	\begin{equation*}
	\partial_\tau\Big(v_y^{1/2}\Big) = \partial_y \left(\frac{{1}}{8}v_y^{-2}v_{3y} -\frac{5}{32}v_y^{-3}v_{2y}^2\right),
	\end{equation*}
which implies a reciprocal transformation $(y,\tau,v)\rightarrow(z,\tau,r)$ defined by
\begin{equation*}
\textmd{d}z = v_y^{1/2}\textmd{d}y + \left(\frac{{1}}{8}v_y^{-2}v_{3y} -\frac{5}{32}v_y^{-3}v_{2y}^2\right)\textmd{d}t,\quad r(z,t)=v_y.
\end{equation*}}. Following the general changing rule of symmetries, $K^{(1)} _{3}(v)$ is mapped to
\begin{equation*}
\eqalign{
K^{(1)} _{3}(r) &=r^{3/2}\partial_z r^{-1/2}\partial_z^{-1}r^{-1/2}\partial_z  K^{(1)} _{3}(v)|_{v_y=r,~\partial_y = r^{1/2}\partial_z}\\
&= \frac{1}{4}r_{3z} -\frac{3}{4}r^{-1}r_zr_{2z} + \frac{15}{32}r^{-2}r_z^3.
}
\end{equation*}
Then assuming $p=\ln r$, we get
\begin{equation*}
K^{(1)} _{3}(p) =e^{-p} K^{(1)} _{3}(r)|_{r=e^p} = \frac{1}{4}p_{3z} - \frac{1}{32}p_z^3,
\end{equation*}
whose flow equation is nothing but the potential modified Korteweg-de Vries (KdV) equation
\begin{equation}\label{pmkdv}
p_\tau = \frac{1}{4}p_{3z} - \frac{1}{32}p_z^3.
\end{equation}

For $K^{(2)} _{3}(v)$, we introduce new variables as
\begin{equation*}
	\left\{\eqalign{
		z = v 	 \\
		r= v_y,
	}\right.
\end{equation*}
which should also be understood as a reciprocal transformation defined on a conservation law of the equation $v_\tau = K^{(2)} _{3}(v)$. After such transformation, $K^{(2)} _{3}(v)$ turns into
\begin{equation*}
K^{(2)} _{3}(r) =r^2\partial_z r^{-1} K^{(2)} _{3}(v)|_{v_y=r,~\partial_y=r\partial_z}= 2r_{3z} -12 r^{-1}r_zr_{2z} + 12 r^{-2}r_z^3,
\end{equation*}
which, after assuming $p=-r^{-1}$, is transformed to a linear symmetry
\begin{equation*}
K^{(2)} _{3}(p) =p^2 K^{(2)} _{3}(r)|_{r=-p^{-1}} = 2p_{3z}.
\end{equation*}

Under successive changes of variables, three leading symmetries, $K^{(1)}_3(u)$, $K^{(2)}_3(u)$ and $K^{(3)}_3(u)$, are respectively related to either linear or famous integrable equations, and the whole process is illustrated in the following diagram.
\begin{equation*}\fl
	\begin{CD}
	K^{(1)} _{3}(u)	@>{y=\partial_x^{-1} u_x^2}>{w=u_x}>K^{(1)} _{3}(w)	@> {v=-w^{-1}}>> K^{(1)} _{3}(v) @>{z=\partial_y^{-1} v_y^{1/2}}>{r=v_y}> K^{(1)} _{3}(r) @>{p=\ln r}>>K^{(1)} _{3}(p) \\
	K^{(2)} _{3}(u) @>{y=\partial_x^{-1} u_x^2}>{w=u_x}> K^{(2)} _{3}(w) @>{v=-w^{-1}}>> K^{(2)} _{3}(v) @>{z=v}>{r=v_y}> K^{(2)} _{3}(r) @>{p=-v^{-1}}>> K^{(2)} _{3}(p) \\
	K^{(3)} _{3}(u) @>{y=\partial_x^{-1} u_x^2}>{w=u_x}> K^{(3)} _{3}(w) @>{v=-w^{-1}}>> K^{(3)} _{3}(v)
	\end{CD}
\end{equation*}
Enlightened by these results,  we will consider three hierarchies $K^{(1)} _{2n+3}(u)$, $K^{(2)} _{2n+3}(u)$ and $K^{(3)} _{2n+3}(u)$ $(n=0,1,2,\cdots)$ by exploring their recursion operators under the transformations stated above.

\subsection{ $\mathcal{R}^{(1)}(u)$, $\mathcal{R}^{(2)}(u)$ and $\mathcal{R}^{(3)}(u)$ in new variables}
For convenience, we recall the behavior of recursion operators under changes of variables, which is also included in  \cite{fokas,fokas1}: {\it Suppose the equation $u_t = F(u,u_x,\cdots,u_{nx})$ is converted to $v_t = \tilde{F}(v,v_x,\cdots,v_{nx})$ by the transformation \eref{gcv}, then their recursion operators, $\mathcal{R}(u)$ and $\mathcal{R}(v)$, are related to each other via
\begin{equation*}
\mathcal{R}(v)|_{\partial_y = (\partial_x P)^{-1}\partial_x,~v=Q} = \mathbf{D}\mathcal{R}(u)\mathbf{D}^{-1},
\end{equation*}
where the operator $\mathbf{D}$ is defined by \Eref{opd}.
}

With  the help of above transformation rule of recursion operators, we may deduce  recursion operators corresponding to the variants of $K^{(1)} _{3}(u)$ from $\mathcal{R}^{(1)}(u)$, and they are given by
\begin{eqnarray*}
\mathcal{R}^{(1)}(w) & =\mathbf{D}_0\Big(\mathcal{R}^{(1)}(u)|_{u_x =w,~\partial_x = w^2\partial_y}\Big) \mathbf{D}_0^{-1} \\
&= 4 w^2w_y^{-1}\partial_y^2 - \Big(6w^2w_y^{-2}w_{2y}+4w\Big)\partial_y - 2w^2w_y^{-2}w_{3y} + 3 w^2w_y^{-3}w_{2y}^2\\
&\quad + 4ww_y^{-1}w_{2y} + \Big(w_yw_{3y}+\frac{3}{2}w_{2y}^2\Big)w^3w_y^{-5/2}\partial_y^{-1}w^{-1}w_y^{-3/2}w_{2y} \\
&\quad +\Big(2w_yw_{3y} - 3w_{2y}^2\Big)w^3w_y^{-5/2}\partial_y^{-1}w^{-2}w_y^{1/2},
\end{eqnarray*}
\begin{eqnarray*}
\mathcal{R}^{(1)}(v) &= v^2 \Big(\mathcal{R}^{(1)}(w)|_{w=-v^{-1}}\Big)v^{-2} \\
&= 4v_y^{-1}\partial_y^2 - 6v_y^{-2} v_{2y} \partial_y - 2v_y^{-2}v_{3y} + 3 v_y^{-3} v_{2y}^2 \\
&\quad - \Big(v_y^{-3/2} +\frac{3}{2}v_y^{-5/2}v_{2y}^2\Big)\partial_y^{-1} v_y^{-3/2}v_{2y},
\end{eqnarray*}
\begin{eqnarray*}
\mathcal{R}^{(1)}(r) &= r^{3/2}\partial_z r^{-1/2}\partial_z^{-1} r^{-1/2}\partial_z \Big(\mathcal{R}^{(1)}(v)|_{v_y=r,~\partial_y = r^{1/2}\partial_z}\Big) \partial_z^{-1} r^{1/2}\partial_z r^{1/2}\partial_z^{-1}r^{-3/2} \\
&= 4\partial_z^2 - 8r^{-1}r_z\partial_z - 4r^{-1}r_{2z} + 7r^{-2}r_z^2 + r_z\partial_z^{-1}\Big(r^{-2}r_{2z} - r^{-3}r_z^2\Big),
\end{eqnarray*}
and
\begin{eqnarray*}
\mathcal{R}^{(1)}(p) = e^{-p} \Big(\mathcal{R}^{(1)}(r)|_{r=e^p}\Big)e^p = 4\partial_z^2 - p_z^2 + p_z\partial_z^{-1} p_{2z},
\end{eqnarray*}
which is the recursion operator of $K^{(1)}_3(p)$, or equivalently, the potential modified KdV equation \eref{pmkdv}. Thererfore, {\it the hierarchy of symmetries $K^{(1)}_{2n+3}(u)(n=0,1,2,\cdots)$ of the HS equation has its roots in the potential modified KdV hierarchy.}

Along the route from $K^{(2)}_3(u)$ to $K^{(2)}_3(p)$, $\mathcal{R}^{(2)}(u)$ is step by step changed to
\begin{eqnarray*}
\mathcal{R}^{(2)}(w) & =\mathbf{D}_0\Big(\mathcal{R}^{(2)}(u)|_{u_x =w,~\partial_x = w^2\partial_y}\Big) \mathbf{D}_0^{-1} \\
&= 2w^4 w_y^{-2}\partial_y^2 -\Big(6w^4w_y^{-3}w_{2y} - 4w^3w_y^{-1}\Big)\partial_y - 2w^4w_y^{-3}w_{3y} \\
&\quad + 6w^4w_y^{-4}w_{2y}^2 - 4w^3w_y^{-2}w_{2y},
\end{eqnarray*}
\begin{eqnarray*}
\mathcal{R}^{(2)}(v) &= v^2 \Big(\mathcal{R}^{(2)}(w)|_{w=-v^{-1}}\Big)v^{-2} \\
&= 2v_y^{-2}\partial_y^2 - 6 v_y^{-3}v_{2y}\partial_y - 2v_y^{-3} v_{3y} + 6v_y^{-4}v_{2y}^2,
\end{eqnarray*}
\begin{eqnarray*}
\mathcal{R}^{(2)}(r) &= r^2\partial_z r^{-1}\Big(\mathcal{R}^{(2)}(v)|_{v_y=r,~\partial_y=r\partial_z}\Big) r\partial_z^{-1}r^{-2} \\
&= 2\partial_z^2 - 8r^{-1}r_z\partial_z - 4r^{-1}r_{2z} + 12r^{-2}r_z^2,
\end{eqnarray*}
and
\begin{equation*}
\mathcal{R}^{(2)}(p) = p^2\Big(\mathcal{R}^{(2)}(r)|_{r=-p^{-1}}\Big) p^{-2} = 2\partial_z^2,
\end{equation*}
which generates a hierarchy of linear symmetries
\[
K^{(2)}_{2n+3}(p)=\mathcal{R}^{(2)}(p)^nK^{(2)}_3(p)=2^{n+1}p_{(2n+3)z}, \quad (n=0,1,2,\cdots).
\]
Consequently, {\it any member in the hierarchy $K^{(2)}_{2n+3}(u)(n=0,1,2,\cdots)$ can be linearized through a series of changes  of variables.}

Following the tracks of $K^{(3)}_3(u)$, $\mathcal{R}^{(3)}(u)$ turns into
\begin{eqnarray*}
\mathcal{R}_3(w) & =\mathbf{D}_0\Big(\mathcal{R}_3(u)|_{u_x =w,~\partial_x = w^2\partial_y}\Big) \mathbf{D}_0^{-1} \\
&= 2\partial_y^2 - 8w^{-1}w_y\partial_y -4 w^{-1}w_{2y} + 12w^{-2}w_y^2,
\end{eqnarray*}
and
\begin{equation*}
\mathcal{R}_3(v) = v^2 \Big(\mathcal{R}_3(w)|_{w=-v^{-1}}\Big)v^{-2} = 2\partial_y^2,
\end{equation*}
which generates a hierarchy of linear symmetries
\[
K^{(3)}_{2n+3}(v)=\mathcal{R}^{(3)}(v)^nK^{(3)}_3(v)=2^{n+1}v_{(2n+3)z},\quad (n=0,1,2,\cdots).
\]
Hence,  {\it the hierarchy $K^{(3)}_{2n+3}(u)(n=0,1,2,\cdots)$ is also linearized through even less changes of variables. }

{\bf Remarks:} These recursion operators are all figured out with the help of SUSY2, a package \cite{susy2} designed by Popowicz for the computer algebra system Reduce. The calculations involved require certain manipulation of the  expressions with multiple integrating operators, like $\partial_y^{-1} w^kw_y\partial_y^{-1}$, $\partial_y^{-1} \Big(2w^k w_y w_{2y} + kw^{k-1}w_y^3\Big)\partial_y^{-1}$ and $\partial_y^{-1} \Big(2w^kw_{3y} - k(k-1)w^{k-2}w_y^3\Big)\partial_y^{-1}$, and indeed the multiplicity of integrating operators could be reduced through integration by parts, for instance
\begin{eqnarray*}
\partial_y^{-1} w^kw_y\partial_y^{-1} = \frac{1}{k+1}\Big(w^{k+1}\partial_y^{-1} - \partial_y^{-1}w^{k+1}\Big),\quad (k\neq -1)\\
\partial_y^{-1} \Big(2w^k w_y w_{2y} + kw^{k-1}w_y^3\Big)\partial_y^{-1} = w^k w_y^2\partial_y^{-1} - \partial_y^{-1} w^k w_y^2,\\
\partial_y^{-1} \Big(2w^kw_{3y} - k(k-1)w^{k-2}w_y^3\Big)\partial_y^{-1} \\
~~~~~~~~~~ = \left(w^kw_{2y} - \frac{k}{2}w^{k-1}w_y^2\right)\partial_y^{-1} - \partial_y^{-1}\left(w^kw_{2y} - \frac{k}{2}w^{k-1}w_y^2\right).
\end{eqnarray*}

\subsection{Recursion operator of $K^{(4)}_5(u)$}
As mentioned before, there is supposed to be another hierarchy of commuting symmetries leading by $K^{(4)}_5(u)$, but its recursion operator is still absent. In this subsection, we manager  to establish its recursion operator via the link between $K^{(4)}_5(u)$ and the Fordy-Gibbons equation \cite{fordy}.

Let us demonstrate the process of relating $K^{(4)}_5(u)$ with the Fordy-Gibbons equation.  Firstly applying the reciprocal transformation $\mathcal{T}_0$ to $K^{(4)}_5(u)$,
we have
\begin{eqnarray*}
K^{(4)}_5(w) &=\mathbf{D}_0 K^{(4)}_5(u)_{u_x=w,~\partial_x=w^2\partial_y}.
\end{eqnarray*}
Secondly assuming $v=-w^{-1}$, we get
\begin{eqnarray*}
K^{(4)}_5(v) = v^{2} K^{(4)}_5(w)|_{w=-v^{-1}} = \frac{2}{9}v_{2y}^{5/3}v_{5y} - \frac{10}{9}v_{2y}^{8/3}v_{3y}v_{4y} + \frac{80}{81}v_{2y}^{11/3}v_{3y}^3.
\end{eqnarray*}
Thirdly introducing a reciprocal transformation
\begin{equation*}
\left\{\eqalign{
z = \partial_y^{-1}v_{2y}^{1/3}\\
r = v_{2y},
}\right.
\end{equation*}
and applying it to $K^{(4)}_5(v) $ yields
\begin{eqnarray*}
K^{(4)}_5(r) &= r^{4/3}\partial_z r^{-1/3}\partial_z^{-1} r^{-2/3}\partial_z r^{1/3}\partial_z K^{(4)}_5(r)|_{v_{2y=r,~\partial_y=r^{1/3}\partial_z}}\\
&=\frac{5}{9}r_{5z} - \frac{10}{9}r^{-1}r_zr_{4z} - \frac{70}{27}r^{-1}r_{2z}r_{3z} + \frac{380}{81}r^{-2}r_z^2r_{3z} + \frac{620}{81}r^{-2}r_zr_{2z}^2 \\
&\quad -\frac{1180}{81}r^{-3}r_z^3r_{2z} + \frac{4160}{729}r^{-4}r_z^5.
\end{eqnarray*}
Finally let $p=r^{-1}r_z/3$, then we obtain
\begin{eqnarray*}
K^{(4)}_5(p) &=\frac{1}{3}\partial_z \exp{(-3\partial_z^{-1}p)}  K^{(4)}_5(r)|_{r=\exp{(3\partial_z^{-1}p)}}\\
&= \frac{2}{9}\Big(p_{5z} - 5p_zp_{3z} - 5p^2p_{3z} -5p_{2z}^2 - 20pp_zp_{2z} - 5p_z^3 + 5p_zp^4\Big),
\end{eqnarray*}
whose flow equation is just the Fordy-Gibbons equation \cite{fordy}, also known as a common modification of the Sawada-Kotera equation and the Kaup-Kupershmidt equation.

From the recursion operator of the Fordy-Gibbons equation, deduced from that of the Sawada-Kotera (or Kuper-Kupershmidt) equation via a Miura-type transformation,
\begin{eqnarray*}
\mathcal{R}^{(4)}(p) = \partial_z(\partial_z+2p)\partial_z^{-1}(\partial_z-p)(\partial_z+p)\partial_z(\partial_z-p)(\partial_z+p)\partial_z^{-1}(\partial_z-2p),
\end{eqnarray*}
we may derive the recursion operator of $K^{(4)}_5(u)$ in different coordinates as follows
\begin{eqnarray*}
\mathcal{R}^{(4)}(r) &=  r\partial_z^{-1} \Big(\mathcal{R}^{(4)}(p)|_{p=r^{-1}r_z/3}\Big)\partial_z r^{-1} \\
&= r^{1/3}\partial_z r^{2/3}\partial_z^{-1}r^{1/3}\partial_z r^{-2/3}\partial_z r^{1/3} \partial_z r^{1/3}\partial_z r^{-2/3}\partial_z r^{1/3} \\
&\qquad \cdot \partial_z^{-1} r^{2/3} \partial_z r^{-2/3}\partial_z r^{-1},
\end{eqnarray*}
\begin{eqnarray*}
\mathcal{R}^{(4)}(v) &= \partial_y^{-2} v_{2y}^{2/3}\partial_yv_{2y}^{1/3}\partial_y^{-1} v_{2y}^{-1} \Big(\mathcal{R}^{(4)}(r)|_{r=v_{2y},~\partial_z=v_{2y}^{-1/3}\partial_y}\Big) \\
&\qquad\cdot v_{2y}\partial_y v_{2y}^{-1/3}\partial_y^{-1}v_{2y}^{-2/3}\partial_y^2 \\
&= \partial_y^{-2} v_{2y}^{2/3}\partial_yv_{2y}^{1/3}\partial_y^{-1} v_{2y}^{-1}\partial_y v_{2y}^{2/3} \partial_y^{-1} v_{2y}^{1/3}\partial_y v_{2y}^{-1} \partial_y^3 v_{2y}^{-1}\partial_yv_{2y}^{1/3} \\
&\qquad\cdot \partial_y^{-1}v_{2y}^{2/3}\partial_y v_{2y}^{-1}\partial_y^2 v_{2y}^{-1/3}\partial_y^{-1}v_{2y}^{-2/3}\partial_y^2,
\end{eqnarray*}
\begin{eqnarray*}
\mathcal{R}^{(4)}(w) &= w^2\Big(\mathcal{R}^{(4)}(v)|_{v=-w^{-1}}\Big)w^{-2},
\end{eqnarray*}
and
\begin{eqnarray*}
\mathcal{R}^{(4)}(u) &= \partial_x^{-1}u_x^{-1}\partial_xu_x^{2}\partial_x^{-1} u_x^{-1} \Big(\mathcal{R}^{(4)}(w)|_{w=u_x,~\partial_y= u_x^{-2}\partial_x}\Big)u_x\partial_xu_x^{-2}\partial_x^{-1}u_x\partial_x \\
&= \partial_x^{-1}u_x^{-1}\partial_xu_x^{2}\partial_x^{-1} u_x \Big(\mathcal{R}^{(4)}(v)|_{v=-u_x^{-1},~\partial_y= u_x^{-2}\partial_x}\Big)u_x^{-1}\partial_xu_x^{-2}\partial_x^{-1}u_x\partial_x \\
&= \partial_x^{-1}u_x^{-1}\partial_xu_x^{2}\partial_x^{-1} u_x \partial_x^{-1} u_x^2\partial_x^{-1} r_3^{2/3} \partial_x r_3^{1/3} \partial_x^{-1}r_3^{-1} \partial_x r_3^{2/3} \partial_x^{-1} r_3^{1/3}\partial_x r_3^{-1} \\
&\quad \cdot (u_x^{-1}\partial_x)^3 r_3^{-1}u_x^{-2} \partial_x r_3^{1/3} \partial_x^{-1} r_3^{2/3} \partial_x r_3^{-1} (u_x^{-2}\partial_x)^2 r_3^{-1/3}\partial_x^{-1}r_3^{-2/3}\partial_x u_x^{-1} \\
&\quad \cdot \partial_x u_x^{-1}\partial_xu_x^{-2}\partial_x^{-1}u_x\partial_x
\end{eqnarray*}
where $r_3$ is introduced by Lemma \ref{lem:1}. $\mathcal{R}^{(4)}(u)$ in present form is rather involved and it would be interesting to find a simpler version for it. Unfortunately, we  failed to do so due to the fractional powers of $r_3$.

\section{Conclusions}
On the basis of a simple conservation law, a reciprocal transformation $\mathcal{T}_0$ was defined and adopted to convert the HS equation \eref{hs} to \Eref{hsode}, whose symmetries is easily accessible  due to its equivalence to $v_t=1$, and enabled us to unveil  remarkable properties of the HS equation from a new angle. Through the inverse of $\mathcal{T}_0$, a class of conserved densities for the HS equation were easily  constructed from infinitesimal symmetries of \Eref{hsode}, and expressed as arbitrary functions of basic variables $r_k(k=2,3,\cdots)$, which are images of  $v_k$, invariants of $v_t=1$, under the inverse of $\mathcal{T}_0$. In addition, conserved densities satisfying \eref{law:2} and even nonlocal ones were also presented.

Benefited from involution of the conserved density $u_x^2$ with others, $\mathcal{T}_0$ was applied to symmetries of the HS equation, and allowed us to essentially get a better understanding of them. Through appropriate changes of variables beginning with $\mathcal{T}_0$, three hierarchies of commuting symmetries of the HS equations were either converted to symmetries belonging to the potential modified KdV hierarchy, or even linearized. Following the same strategy, we constructed a hereditary recursion operator for a fifth order symmetry $K^{(4)}_5(u)$.

This paper reasonably explains exotic features discovered by Wang \cite{wang} on the one hand, and launches some new questions about the HS equation on the other hand. For instance, are there other hierarchies of commuting symmetries leading by some special $K_3(u)$? What will happen if one studies commutativity of symmetries generated by conserved densities given by Proposition \ref{pro:2}? The most interesting and challenging question may be how to completely find out conserved densities and symmetries for the HS equation.

\section*{Acknowledgment}
Both authors would like to thank Dr. Jing Ping Wang for her interests in this work and valuable discussions. This work is supported by National Natural Science Foundation of China (NNSFC) (Grant Nos. 11271366 and 11331008), Specialized Research Fund for the Doctoral Program of Higher Education (SRFDP) (Grant No. 20120023120006) and Fundamental Research Funds for Central Universities (Grant No. 2011QS02).

\appendix

\section{Nonlocal conserved densities of the HS equation}
Since the HS equation has a conservation law
\begin{equation*}
\partial_t u_x^2 = \partial_x(2uu_x^2),
\end{equation*}
we introduce a potential $v$ such that
\begin{equation}\label{pot}
\left\{\eqalign{
v_x &= u_x^2 \\
v_t &= 2uu_x^2,
}\right.
\end{equation}
and rewrite the evolution form \eref{hs:nonlocal} of the HS equation as
\begin{equation}\label{hs:nonlocal:2}
u_t = 2uu_x -v .
\end{equation}
With the potential $v$ and $r_k$'s defined in Lemma \ref{lem:1}, we have
\begin{lem}\label{lem:2}
Let $R_k = v^2r_k~(k=1,2,\cdots)$, then
\begin{equation*}
\partial_t R_1 = 2u(\partial_x R_1) + v^2\quad \mbox{and} \quad \partial_tR_k = 2u(\partial_x R_k)\quad (k=2,3,\cdots).
\end{equation*}
\end{lem}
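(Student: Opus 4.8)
The plan is to reduce everything to Lemma \ref{lem:1} together with the defining relations \eref{pot} for the potential $v$. First I would record the two facts I am allowed to use: from Lemma \ref{lem:1}, $\partial_t r_1 = 2u\,\partial_x r_1 + 1$ and $\partial_t r_k = 2u\,\partial_x r_k$ for $k\ge 2$, all modulo the HS equation; and from \eref{pot}, $v_x = u_x^2$ and $v_t = 2uu_x^2 = 2uv_x$. Since $R_k = v^2 r_k$, the Leibniz rule gives $\partial_t R_k = 2vv_t\,r_k + v^2\,\partial_t r_k$ and $\partial_x R_k = 2vv_x\,r_k + v^2\,\partial_x r_k$. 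Substituting $v_t = 2uv_x$ into the first identity yields $\partial_t R_k = 4uvv_x\,r_k + v^2\,\partial_t r_k$, and multiplying the second by $2u$ gives $2u\,\partial_x R_k = 4uvv_x\,r_k + 2uv^2\,\partial_x r_k$. Subtracting, the potential-derivative terms cancel identically and one is left with $\partial_t R_k - 2u\,\partial_x R_k = v^2(\partial_t r_k - 2u\,\partial_x r_k)$.

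The conclusion is then immediate by cases: for $k\ge 2$, Lemma \ref{lem:1} makes the bracket vanish, so $\partial_t R_k = 2u\,\partial_x R_k$; for $k=1$, the bracket equals $1$, so $\partial_t R_1 = 2u\,\partial_x R_1 + v^2$. That is exactly the claimed pair of evolution equations.

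There is essentially no obstacle here beyond bookkeeping. The only point that deserves a word of care is the consistency of the mixed partials of $v$: the system \eref{pot} is compatible precisely because $u_x^2$ is a conserved density of the HS equation (indeed $\partial_t(u_x^2) = \partial_x(2uu_x^2)$), so $v$ is well defined as a potential on solutions of \eref{hs}, and all manipulations above are legitimate on that solution manifold. Everything else is a one-line application of the product rule and of Lemma \ref{lem:1}, so I would present the argument in the compact subtract-and-cancel form above rather than expanding $r_k$ explicitly.
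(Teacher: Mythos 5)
Your proof is correct and matches the paper's approach: the paper simply states ``Direct calculation,'' and your subtract-and-cancel argument using $v_t = 2uv_x$ together with Lemma \ref{lem:1} is exactly the calculation intended. No issues.
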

\begin{proof}
Direct calculation.
\end{proof}

\begin{prop}
For the HS equation \eref{hs}, $F(v,u,u_x,u_{2x},\cdots,u_{nx})$ serves as a conserved density such that
\begin{equation}\label{law:3}
\partial_t F(v,u,u_x,u_{2x},\cdots,u_{nx}) = \partial_x\Big(2uF(v,u,u_x,u_{2x},\cdots,u_{nx})\Big).
\end{equation}
if and only if
\begin{equation*}
F(v,u,u_x,u_{2x},\cdots,u_{nx}) = v^{-4}u_x^2 G(v,vu-v^2u_x^{-1},R_2,R_3,\cdots,R_n),
\end{equation*}
where $G$ is an arbitrary smooth function of its arguments.
\end{prop}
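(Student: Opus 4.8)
The plan is to repeat the argument of Proposition \ref{pro:1}, but with a coordinate change adapted to the two extra arguments $v$ and $u$. First I would record how the relevant quantities are transported by the flow. From \eref{pot} one has $v_t = 2uu_x^2 = 2uv_x$, so $\partial_t v = 2u(\partial_x v)$, i.e.\ $v$ behaves like the $R_k$ with $k\ge 2$ of Lemma \ref{lem:2}. Next I would single out the combination
\begin{equation*}
s \equiv vu - v^2u_x^{-1} = vu + R_1 ,
\end{equation*}
and, using the evolutionary form \eref{hs:nonlocal:2} together with $v_t=2uv_x$, compute $\partial_t(vu) - 2u\,\partial_x(vu) = -v^2$; combining this with $\partial_t R_1 - 2u\,\partial_x R_1 = v^2$ from Lemma \ref{lem:2} gives $\partial_t s - 2u\,\partial_x s = 0$. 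Thus $v$, $s$ and $R_2,\cdots,R_n$ are all transported in the same way, while $R_1$ alone carries the anomalous term $v^2$.

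I would then pass to the new coordinates
\begin{equation*}
\Gamma':\ (v,u,u_x,u_{2x},\cdots,u_{nx})\longmapsto (v,s,R_1,R_2,\cdots,R_n),
\end{equation*}
which is invertible away from $v=0$ and $u_x=0$: with a suitable ordering its Jacobian is triangular with non-vanishing diagonal entries $1$, $v$ and $v^2u_x^{-2k}$ ($k=1,\cdots,n$). Writing $F=\hat F(v,s,R_1,R_2,\cdots,R_n)$ and applying the chain rule with the transport identities above and Lemma \ref{lem:2}, every new coordinate except $R_1$ contributes a term of the form $2u\,\partial_x(\cdot)$, so $\partial_t F = 2u\,\partial_x F + v^2\hat F_{R_1}$. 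Subtracting $\partial_x(2uF)=2u\,\partial_x F + 2u_xF$ and using $u_x=-v^2R_1^{-1}$ (hence $2u_xF=-2v^2R_1^{-1}\hat F$) yields the master identity
\begin{equation*}
\partial_t F - \partial_x(2uF) = v^2\Big(\hat F_{R_1} + 2R_1^{-1}\hat F\Big),
\end{equation*}
valid for every $F$ in the stated class of arguments.

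Both directions then follow at once. For necessity, \eref{law:3} forces $\hat F_{R_1} + 2R_1^{-1}\hat F = 0$, whose general solution by quadrature in the single variable $R_1$ is $\hat F = R_1^{-2}G(v,s,R_2,\cdots,R_n)$ with $G$ arbitrary; since $R_1^{-2}=v^{-4}u_x^2$, this is exactly the asserted form. For sufficiency, substituting $\hat F=R_1^{-2}G$ into the master identity makes its right-hand side vanish. The whole argument is a verbatim analogue of Proposition \ref{pro:1}; the only genuinely non-mechanical points --- and hence the main obstacle --- are guessing the correct potential-dependent coordinate $s$ so that $\partial_t s - 2u\,\partial_x s=0$, and noticing the identity $v^{-4}u_x^2 = R_1^{-2}$ that recasts the prefactor as a pure power of $R_1$; once these are in place, everything reduces to careful bookkeeping of the chain rule.
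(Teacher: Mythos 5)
Your proof is correct and follows essentially the same route as the paper: both rely on Lemma \ref{lem:2}, pass to coordinates built from $v$ and the $R_k$, and reduce the conservation-law condition \eref{law:3} to a linear first-order equation whose general solution is $R_1^{-2}G$. The only difference is one of bookkeeping: the paper keeps $u$ itself as a coordinate and arrives at the two-variable equation $-v\hat F_u + v^2\hat F_{R_1} + 2v^2R_1^{-1}\hat F = 0$, which it integrates by the method of characteristics (the invariant $vu+R_1$ emerging as a by-product), whereas you identify the transported combination $s=vu+R_1$ in advance and thereby collapse the problem to an ordinary differential equation in $R_1$ alone --- a slightly cleaner packaging of the same computation.
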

\begin{proof}
The sufficiency is proved by direct calculation, while the necessity will be shown by solving all conserved densities in \Eref{law:3}.

Through an invertible change of coordinates
\begin{equation*}
\bar{\Gamma}: (v,u,u_x,u_{2x},\cdots,u_{nx})\mapsto(v,u,R_1,R_2,\cdots,R_n),
\end{equation*}
which is well-defined since $\partial R_k/\partial u_{kx} = v^2u_x^{-2k}~(k=1,2,\cdots)$, the conserved  density $F(v,u,u_x,u_{2x},\cdots,u_{nx})$ may be reformulated as  $\hat{F}(v,u,R_1,R_2,\cdots,R_n)$. Then the conservation law \eref{law:3} is calculated as
\begin{equation*}
\eqalign{
0 &= \partial_t \hat{F}(v,u,R_1,R_2,\cdots,R_n) - \partial_x\Big(2u\hat{F}(v,u,R_1,R_2,\cdots,R_n)\Big) \\
&= v_t\frac{\partial \hat{F}}{\partial v} + u_t \frac{\partial \hat{F}}{\partial u} + \sum_{k=1}^n (\partial_t R_k)\frac{\partial \hat{F}}{\partial R_k} - 2u v_x \frac{\partial \hat{F}}{\partial v} - 2uu_x\frac{\partial \hat{F}}{\partial u} \\
& \quad - 2u\sum_{k=1}^n(\partial_x R_k)\frac{\partial \hat{F}}{\partial R_k} - 2u_x \hat{F} .
}
\end{equation*}
Replacing $v_x$, $v_t$, $u_t$ and $(\partial_t R_k)$ by Equations \eref{pot}, \eref{hs:nonlocal:2} and those in Lemma \ref{lem:2}, then simplifying these expressions yields
\begin{equation*}
0 = -v\frac{\partial \hat{F}}{\partial u} + v^2\frac{\partial \hat{F}}{\partial R_1} - 2u_x \hat{F} = -v\frac{\partial \hat{F}}{\partial u} + v^2\frac{\partial \hat{F}}{\partial R_1} + 2 v^2R_1^{-1} \hat{F}.
\end{equation*}
Solving it by the characteristic method, we obtain the conserved density in coordinates $(v,u,R_1,R_2,\cdots,R_n)$
\begin{equation*}
\hat{F}(v,u,R_1,R_2,\cdots,R_n) = R_1^{-2} G(v,vu+R_1,R_2,R_3,\cdots,R_n),
\end{equation*}
where $G$ is an arbitrary smooth function.
\end{proof}

\section*{References}

\end{document}